\newtheorem{theorem}{Theorem}[section]
\newtheorem{lemma}[theorem]{Lemma}
\newtheorem{invariant}[theorem]{Invariant}
\theoremstyle{definition}
\newcommand{\ignore}[1]{}
\newcommand{\hcm}[1][1]{\hspace*{#1 cm}}
\newcommand{\rb}[2]{\raisebox{#1 mm}[0mm][0mm]{#2}}
\newcommand{\istrut}[2][0]{\rule[- #1 mm]{0mm}{#1 mm}\rule{0mm}{#2 mm}}
\newcommand{\zero}[1]{\makebox[0mm][l]{$#1$}}
\newcommand{\paren}[1]{{\mathopen{}\left( #1 \right)\mathclose{}}}
\newcommand{\angbrack}[1]{\left< #1 \right>}
\newcommand{\ceil}[1]{\lceil #1 \rceil}
\newcommand{\floor}[1]{\lfloor #1 \rfloor}
\newcommand{\f}[2]{\frac{#1}{#2}}
\newcommand{\poly}{\operatorname{poly}}
\newcommand{\bottom}{\perp}
\newcommand{\MASS}{\operatorname{mass}}
\newcommand{\Euler}{\operatorname{Euler}}
\newcommand{\JOIN}{\mbox{\sc Join}}
\newcommand{\SPLIT}{\mbox{\sc Split}}
\newcommand{\INSERTEDGE}{\mbox{\sc Insert}}
\newcommand{\DELETEEDGE}{\mbox{\sc Delete}}
\newcommand{\CONN}{\mbox{\sc Conn?}}
\newcommand{\LIST}{\mbox{\sc List}}
\newcommand{\REPLACEMENTEDGE}{\mbox{\sc ReplacementEdge}}
\newcommand{\SCINSERT}{\mbox{\sc SCInsert}}
\newcommand{\SCDELETE}{\mbox{\sc SCDelete}}
\newcommand{\SCJOIN}{\mbox{\sc SCJoin}}
\newcommand{\SCSPLIT}{\mbox{\sc SCSplit}}
\newcommand{\UPDATEADJ}{\mbox{\sc UpdateAdj}}
\newcommand{\ADJQUERY}{\mbox{\sc AdjQuery}}
\newcommand{\MEMQUERY}{\mbox{\sc MembQuery}}
\newcommand{\ID}{\operatorname{ID}}
\newcommand{\MEM}{\operatorname{Memb}}
\newcommand{\ADJ}{\operatorname{SupAdj}}
\newcommand{\ChADJ}{\operatorname{ChAdj}}
\newcommand{\Patrascu}{P\v{a}tra\c{s}cu}
\begin{document}
\title{Faster Worst Case Deterministic Dynamic Connectivity\thanks{This work is supported in part by the Danish National Research Foundation grant DNRF84 through the Center for Massive Data Algorithmics (MADALGO).
S. Pettie is supported by NSF grants CCF-1217338, CNS-1318294, and CCF-1514383.
M. Thorup's research is partly supported by
Advanced Grant DFF-0602-02499B from the Danish Council for Independent
Research under the Sapere Aude research career programme.}}

\author{Casper Kejlberg-Rasmussen\thanks{MADALGO, Aarhus University}\\ 
\and
Tsvi Kopelowitz\thanks{University of Michigan.  Email: kopelot@gmail.com}\\ 
\and
Seth Pettie\thanks{University of Michigan.  Email: pettie@umich.edu}\\ 
\and
Mikkel Thorup\thanks{University of Copenhagen.  Email: mthorup@di.ku.dk}\\ 
}

\date{} 
\maketitle

\begin{abstract}
We present a deterministic dynamic connectivity data structure for undirected graphs 
with worst case update time $O\paren{\sqrt{\frac{n(\log\log n)^2}{\log n}}}$ and constant query time.
This improves on the previous best deterministic worst case algorithm of 
Frederickson ({\em STOC}, 1983) and Eppstein Galil, Italiano, and Nissenzweig ({\em J. ACM}, 1997), which had update time $O(\sqrt{n})$.  
All other algorithms for dynamic connectivity are either randomized (Monte Carlo)
or have only amortized performance guarantees.
\end{abstract}

\section{Introduction}

{\em Dynamic Connectivity} is perhaps the single most fundamental unsolved problem in the area of dynamic graph algorithms.
The problem is simply to maintain a dynamic undirected graph $G=(V,E)$ subject to edge updates and connectivity queries:

\begin{description}[noitemsep]
\item[$\INSERTEDGE(u,v)$ : ] Set $E \leftarrow E \cup \{(u,v)\}$.
\item[$\DELETEEDGE(u,v)$ : ] Set $E \leftarrow E \setminus \{(u,v)\}$.
\item[$\CONN(u,v)$ : ]  Determine whether $u$ and $v$ are in the same connected component in $G$.
\end{description}

Over thirty year ago Frederickson~\cite{Frederickson85} introduced {\em topology trees} and {\em 2-dimensional} topology trees,
which gave the first non-trivial solution to the problem.  Each edge insertion/deletion is handled in $O(\sqrt{m})$ time and each 
query is handled in $O(1)$ time.  Here $m$ is the current number of edges and $n$ the number of vertices. 
On sparse graphs (where $m=O(n)$) Frederickson's data structure has seen no unqualified improvements or simplifications.  
However, when the graph is dense Frederickson's data structure can be sped up using the general sparsification method of
Eppstein, Galil, Italiano, and Nissenzweig~\cite{EppsteinGIN97}.  Using simple sparsification~\cite{EppsteinGIN92} the update
time becomes $O(\sqrt{n}\log(m/n))$ and using more sophisticated sparsification~\cite{EppsteinGIN97} 
the running time becomes $O(\sqrt{n})$.  This last bound has not been improved in twenty years.

Most research on the dynamic connectivity problem has settled for {\em amortized} update time guarantees.
Following~\cite{HenzingerK99,HenzingerT97}, Holm et al.~\cite{HolmLT01} gave a very simple deterministic 
algorithm with amortized update time $O(\log^2 n)$ and query time $O(\log n/\log\log n)$.\footnote{Any connectivity structure that maintains (internally) a spanning forest can have query time $O(\log_{t_u/\log n} n)$ if the update time is $t_u = \Omega(\log n)$.}
However, in the worst case Holm et al.'s~\cite{HolmLT01} 
update takes $\Omega(m)$ time, the same for computing a spanning tree from scratch!
Recently Wulff-Nilsen~\cite{Wulff-Nilsen13} improved the update time of~\cite{HolmLT01} to $O(\log^2 n/\log\log n)$.
Using Las Vegas randomization, Thorup~\cite{Thorup00} gave a dynamic connectivity structure with an 
$O(\log n(\log\log n)^3)$ amortized update time.  In other words, the algorithm answers all connectivity queries correctly
but the amortized update time holds with high probability.

In a major breakthrough Kapron, King, and Mountjoy~\cite{KapronKM13} used Monte Carlo randomization 
to achieve a worst case update time of $O(\log^5 n)$.  However, this algorithm has three notable drawbacks.
The first is that it is susceptible to undetected false negatives: $\CONN(u,v)$ may report that $u,v$ are disconnected when they are, in fact, connected.
The second is that even when $\CONN(u,v)$ (correctly) reports that $u,v$ are connected, it is forbidden from exhibiting a connectivity witness, i.e., a spanning forest in which $u,v$ are joined by a path.
The Kapron et al.~\cite{KapronKM13} algorithm \underline{does} maintain such a spanning forest internally, but if this witness were made public, 
a very simple attack could force the algorithm to answer connectivity queries incorrectly.
Lastly, the algorithm uses $\Omega(n\log^2 n)$ space, which for sparse graphs is superlinear in $m$.
Very recently Gibb et al.~\cite{GibbKKT15} reduced the update time of~\cite{KapronKM13} to $O(\log^4 n)$.

On special graph classes, dynamic connectivity can often be handled more efficiently.  
For example, Sleator and Tarjan~\cite{SleatorT83} maintain a dynamic set of trees in 
$O(\log n)$ worst-case update time subject to $O(\log n)$ time connectivity queries.  (See also~\cite{HenzingerK99,AlstrupHLT05,AcarBHVW04,TarjanW05}.)
Connectivity in dynamic planar graphs can be reduced to the dynamic tree problem~\cite{EppsteinITTWY92,EppsteinITTWY93},
and therefore solved in $O(\log n)$ time per operation.  
The cell probe lower bounds of \Patrascu{} and Demaine~\cite{PatrascuD06} show 
that Sleator and Tarjan's bounds are optimal in the sense that {\em some} operation must take $\Omega(\log n)$ time.
Superlogarithmic updates can be used to get modestly sublogarithmic queries, 
but \Patrascu{} and Thorup~\cite{PatrascuT11} prove the reverse is not possible.
In particular, any dynamic connectivity algorithm with $o(\log n)$ update time
has $n^{1-o(1)}$ query time.  
Refer to Table~\ref{table:priorwork} in the appendix for a history of upper and lower bounds for dynamic connectivity.

\subsection{New Results}

In this paper we return to the classical model of deterministic worst case complexity.  We give a new dynamic connectivity 
structure with worst case update time on the order of
\[
\min\left\{ \sqrt{\frac{m(\log\log n)^2}{\log n}}, \; \; \;  \sqrt{\frac{m\log^5 w}{w}} \right\},
\]
where $w = \Omega(\log n)$ is the word size.\footnote{Our algorithms use the standard repertoire of $AC^0$ operations: left and right shifts, bitwise operations on words, additions and comparisons.  They do not assume unit-time multiplication.}  
These are the first improvements to Frederickson's 2D-topology trees~\cite{Frederickson85} in over 30 years.
Using the sparsification reduction of Eppstein et al.~\cite{EppsteinGIN97} the running time expressions can be made to depend on `$n$' rather than `$m$', so we obtain $O(\sqrt{\frac{n(\log\log n)^2}{\log n}})$ bounds (or faster) for all graph densities.

Compared to the amortized algorithms~\cite{HolmLT01,Thorup00,Wulff-Nilsen13}, 
ours is better suited to {\em online} applications that demand a bound on the latency of {\em every} operation.\footnote{Amortized
data structures are most useful when employed by offline algorithms that do not care about individual operation times. 
The canonical example is the use of amortized Fibonacci heaps~\cite{FT87} to implement Dijkstra's algorithm~\cite{Dij59}.}
Compared to the Monte Carlo algorithms~\cite{KapronKM13,GibbKKT15},
ours is attractive in applications that demand linear space, zero probability of error, and a public witness of connectivity.

The modest speedup obtained by our algorithm over~\cite{Frederickson85,EppsteinGIN97} 
comes from word-level parallelism, which is a widely used in both theory and practice.  
However, just because the underlying machine can operate on $w=\Omega(\log n)$ bits at once
does not mean that $\poly(w)$-factor speedups come easily or automatically.  
The true contribution of this work is in reorganizing the representation of the graph 
{\em so that} word-level parallelism becomes a viable technique.  As a happy byproduct, we 
develop an approach to worst case dynamic connectivity that is conceptually simpler 
than Frederickson's (2-dimensional) topology trees.  

\paragraph{Organization.} 
In Section~\ref{sect:highlevel} we describe our high level approach, without getting into low-level data structural details.
Section~\ref{sect:newstructure} gives a relatively simple instantiation of the high-level approach 
with update time $O(\sqrt{n}/w^{1/4})$, which is slightly slower than our claimed result.
In Section~\ref{sect:speedup} we describe the modifications needed to achieve the claimed bounds.

\section{The High Level Algorithm}\label{sect:highlevel}

The algorithm maintains a spanning tree of each connected component of the graph as a {\em witness} of connectivity.
Each such witness tree $T$ is represented as an Euler tour $\Euler(T)$.\footnote{Henzinger and King~\cite{HenzingerK99} were the first to use Euler tours to represent dynamic trees.  G. Italiano (personal communication) observed that Euler tours could be used in lieu of Frederickson's topology trees to obtain an $O(\sqrt{m})$-time dynamic connectivity structure.}
$\Euler(T)$ is the sequence of vertices encountered in some Euler tour around $T$,
as if each undirected edge were replaced by two oriented edges.
It has length precisely $2(|V(T)|-1)$ if $|V(T)|\ge 2$ (the last vertex is excluded from the list, which is necessarily the same as the first) 
or length 1 if $|V(T)|=1$.
Vertices may appear in $\Euler(T)$ several times.
We designate one copy of each vertex the {\em principle copy}, which is responsible for all edges incident to the vertex.
Each vertex in the graph maintains a pointer to its principle copy.  
Each $T$-edge $(u,v)$ maintains two pointers to the (possibly non-principle) copies of $u$ and $v$ that precede the oriented occurrences of $(u,v)$ and $(v,u)$ in $\Euler(T)$, respectively.  Note that cyclic rotations of $\Euler(T)$ are also valid Euler tours; if $\Euler(T) = (u,\ldots,v)$ the last element of the list is associated with the tree edge $(v,u)$.

When an edge $(u,v)$ that connects distinct witness trees $T_0$ and $T_1$ is inserted, $(u,v)$ becomes a tree edge and 
we need to construct $\Euler(T_0 \cup \{(u,v)\} \cup T_1)$ from $\Euler(T_0)$ and $\Euler(T_1)$.  In the reverse situation,
if a tree edge $(u,v)$ is deleted from $T=T_0\cup \{(u,v)\} \cup T_1$ we first construct $\Euler(T_0)$ and $\Euler(T_1)$ from $\Euler(T)$,
then look for a {\em replacement edge}, $(\hat{u},\hat{v})$ with $\hat{u}\in V(T_0)$ and $\hat{v}\in V(T_1)$. If a replacement is found we
construct
$\Euler(T_0\cup\{(\hat{u},\hat{v})\}\cup T_1)$ from $\Euler(T_0)$ and $\Euler(T_1)$.
Lemma~\ref{lem:surgical} establishes the nearly obvious
fact that the new Euler tours can be obtained from the old Euler tours using 
$O(1)$ of the following {\em surgical operations}: splitting and concatenating lists of vertices, 
and creating and destroying singleton lists containing non-principle copies of vertices.

\begin{lemma}\label{lem:surgical}
If $T = T_0 \cup \{(u,v)\} \cup T_1$ and $(u,v)$ is deleted, $\Euler(T_0)$ and $\Euler(T_1)$ can be constructed from $\Euler(T)$ with $O(1)$ surgical operations.  In the opposite direction, from $\Euler(T_0)$ and $\Euler(T_1)$ we can construct $\Euler(T_0\cup \{(u,v)\} \cup T_1)$ with $O(1)$ surgical operations.  It takes $O(1)$ time to determine which surgical operations to perform.
\end{lemma}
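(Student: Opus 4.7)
The plan is to exhibit the required surgical operations directly from the cyclic structure of $\Euler(T)$, using the pointers already stored with each edge and with each vertex.  The key observation is that, regarded as a cyclic list, $\Euler(T)$ has the form $(\ldots,u^*,v^*,\gamma,\overline{v},\overline{u},\ldots)$, where $u^*$ and $\overline{u}$ are the copies of $u$ on the two sides of the edge $(u,v)$, $v^*$ and $\overline{v}$ are the analogous copies of $v$, and $\gamma$ is the intermediate portion visiting $V(T_1)\setminus\{v\}$.  The two segments lying outside the range $u^*,\ldots,\overline{u}$ together constitute the tour of $T_0$, with $u^*$ and $\overline{u}$ representing the same visit of $u$ in $T_0$'s Euler tour, while the segment $(v^*,\gamma,\overline{v})$ is precisely $\Euler(T_1)$ with a redundant closing copy of $v$ still attached.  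If $T_1=\{v\}$ then $\gamma$ is empty and $v^*=\overline{v}$ coincide, and symmetrically for $T_0$.  All four distinguished positions are located in $O(1)$ time from the two pointers on the edge $(u,v)$ together with their immediate neighbors in the list.

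For deletion, I would perform two splits of $\Euler(T)$ at the gaps between $u^*,v^*$ and between $\overline{v},\overline{u}$, producing three sublists.  The middle one is $(v^*,\gamma,\overline{v})$, and one further split-and-destroy strips its trailing $\overline{v}$ singleton to yield $\Euler(T_1)$; analogously, one split-and-destroy strips $\overline{u}$ from the front of the third sublist, after which concatenating the first and third sublists yields $\Euler(T_0)$.  Whenever $\overline{v}$ or $\overline{u}$ happens to be the principle copy of its vertex, we first reassign principle status to a surviving sibling copy, which is $O(1)$ work via the principle-copy pointer.  The insertion direction runs this process in reverse: rotate $\Euler(T_0)$ with one split and one concatenation so that a chosen copy of $u$ ends the list, do the same to $\Euler(T_1)$ so that a chosen copy of $v$ begins it, create fresh non-principle singletons $\overline{u}$ and $\overline{v}$, and concatenate the pieces in the order dictated by the decomposition above; the two new pointers for the edge $(u,v)$ are then attached to the copies flanking the new directed occurrences of $(u,v)$ and $(v,u)$.

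The only delicate part is the proliferation of degenerate cases: singleton subtrees (where the corresponding $\overline{u}$ or $\overline{v}$ is not created and its role is played by $u^*$ or $v^*$ itself), empty arcs of $T_0$'s tour outside the cut, and principle copies that would otherwise be destroyed.  Each is dispatched by the standard device of promoting a surviving sibling to principle status before destroying the singleton, and by observing that singleton creation and destruction are themselves listed among the permitted surgical operations.  In every case the total work is a fixed constant number of splits, concatenations, and singleton creations and destructions, and the decision of which operations to perform is made in $O(1)$ time from the edge- and vertex-level pointers, which is exactly what the lemma asserts.
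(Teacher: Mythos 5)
Your proposal is correct and follows the same decomposition as the paper: split $\Euler(T)$ around the two directed occurrences of the edge $(u,v)$, identify the subtour of $T_1$ between them, and reassemble $\Euler(T_0)$ from the outer pieces, handling degenerate cases (singleton subtrees, empty arcs) as you note. The one small divergence is that where the paper adaptively chooses \emph{which} sibling copy of $u$ (resp.\ $v$) to destroy so that it always destroys a non-principle copy (observing that at most one copy can be principle), you instead fix the copies $\overline u,\overline v$ to destroy and reassign principle status first if needed; the paper's choice keeps strictly within the stated repertoire of surgical operations, but either variant costs $O(1)$ and both are sound.
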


\begin{proof}
Recall that cyclic shifts of Euler tours are valid Euler tours.
Suppose without loss of generality that $\Euler(T) = (P_0,u,v,P_1,v,u,P_2)$ where $P_0,P_1,$ and $P_2$ are sequences of vertices.  (Note that Euler tours never contain immediate repetitions.  If $P_1$ is empty then $\Euler(T)$ would be just $(P_0,u,v,u,P_2)$; if both $P_0$ and $P_2$ are empty then $\Euler(T) = (u,v,P_1,v)$.)
Then we obtain $\Euler(T_0) = (P_0,u,P_2)$ and $\Euler(T_1) = (v,P_1)$ with $O(1)$ surgical operations, which includes the destruction of non-principle copies of $u$ and $v$; at least one of the two copies must be non-principle.  We could also set $\Euler(T_1) = (P_1,v)$, which would be more economical if the $v$ following $P_1$ in $\Euler(T)$ were the principle copy.

In the reverse direction, write $\Euler(T_0) = (P_0,u,P_1)$ and $\Euler(T_1) = (P_2,v,P_3)$, where the labeled occurrences
are the principle copies of $u$ and $v$.
Then $\Euler(T_0 \cup \{(u,v)\} \cup T_1) = (P_0,u,v,P_3,P_2,v,u,P_1)$, where the new copies of $u$ and $v$ are clearly non-principle copies.  
If $P_2$ and $P_3$ were empty (or $P_0$ and $P_1$ were empty) then we would not need to add a non-principle copy of $v$ (or a non-principle copy of $u$.)
\end{proof}

Define $\hat m$ to be an upper bound on $m$, the number of edges.  The update time of our data structure will be a function of $\hat{m}$. 
The sparsification method of~\cite{EppsteinGIN97}
creates instances in which $\hat m$ is known to be linear in the number of vertices.  

\subsection{A Dynamic List Data Structure}

We have reduced dynamic connectivity in graphs to implementing several simple operations on dynamic lists.
We will maintain a pair $(\mathcal{L},E)$, where $\mathcal{L}$ is a set of lists (containing principle
and non-principle copies of vertices) and 
$E$ is the dynamic set of edges joining principle copies of vertices.
In addition to the creation and destruction of single element lists we must support the following
primitive operations.

\begin{description}[noitemsep]
\item[$\LIST(x)$ : ] Return the list in $\mathcal{L}$ containing element $x$.
\item[$\JOIN(L_0,L_1)$ : ]  Set $\mathcal{L} \leftarrow \mathcal{L} \setminus \{L_0,L_1\} \cup \{L_0L_1\}$,
that is, replace $L_0$ and $L_1$ with their concatenation $L_0L_1$.
\item[$\SPLIT(x)$ : ] Let $L=L_0 L_1 \in \mathcal{L}$, where $x$ is the last element of $L_0$.
Set $\mathcal{L} \leftarrow \mathcal{L} \setminus \{L\} \cup \{L_0,L_1\}$.
\item[$\REPLACEMENTEDGE(L_0,L_1)$ : ] Return any edge joining elements in $L_0$ and $L_1$.
\end{description}

Our implementations of these operations will only be efficient if, after each $\INSERTEDGE$ or $\DELETEEDGE$ operation,
there are no edges connecting distinct lists.  That is,
the $\REPLACEMENTEDGE$ operation is only employed by $\DELETEEDGE$ when deleting
a tree edge in order to restore Invariant~\ref{inv:EulerTour}.

\begin{invariant}\label{inv:EulerTour}
Each list $\mathcal{L}$ corresponds to the Euler tour of a spanning tree of some connected component.
\end{invariant}

The dynamic connectivity operations are implemented as follows.
To answer a $\CONN(u,v)$ query we simply check whether $\LIST(u)=\LIST(v)$.
To insert an edge $(u,v)$ we do $\INSERTEDGE(u,v)$, and if $\LIST(u)\neq\LIST(v)$ then make $(u,v)$ a tree edge
and perform suitable $\SPLIT$s and $\JOIN$s to merge the Euler tours $\LIST(u)$ and $\LIST(v)$.
To delete an edge $(u,v)$ we do $\DELETEEDGE(u,v)$, and if $(u,v)$ is a tree edge in $T = T_0\cup \{(u,v)\} \cup T_1$, 
perform suitable $\SPLIT$s and $\JOIN$s to create $\Euler(T_0)$ and $\Euler(T_1)$ from $\Euler(T)$.   
At this point Invariant~\ref{inv:EulerTour} may be violated as there could be an edge joining $T_0$ and $T_1$.  
We call $\REPLACEMENTEDGE(\Euler(T_0),\Euler(T_1))$ and if it finds an edge, say $(\hat{u},\hat{v})$, 
we perform more $\SPLIT$s and $\JOIN$s to form $\Euler(T_0 \cup \{(\hat{u},\hat{v})\}\cup T_1)$.

Henzinger and King~\cite{HenzingerK99} observed that most off-the-shelf balanced binary search trees can support
$\SPLIT$, $\JOIN$, and other operations in logarithmic time.  
However, they provide no direct support for the $\REPLACEMENTEDGE$ operation, 
which is critical for the dynamic connectivity application.

\section{A New Dynamic Connectivity Structure}\label{sect:newstructure}

\subsection{Chunks and Superchunks}

In order to simplify the maintenance of Invariant~\ref{inv:mass}, stated below,
we shall assume that the maximum degree never exceeds $K$, where 
$K \approx \sqrt{\hat{m}/\poly(w)}$ is a parameter of the algorithm.  
Refer to Appendix~\ref{sect:nodegreebound} for a discussion of clean ways to remove
this assumption.

If $L'$ is a sublist of a list $L\in \mathcal{L}$, define 
$\MASS(L')$ to be the number of edges incident 
to elements of $L'$, counting an edge twice if both endpoints are in $L'$.
The sum of list masses, $\sum_{L\in\mathcal{L}} \MASS(L)$, is clearly at most $2\hat m$, 
where $\hat{m}$ is the fixed upper bound on the number of edges.
We maintain a partition of each list $L \in \mathcal{L}$ into {\em chunks}
satisfying Invariant~\ref{inv:mass}.

\begin{invariant}\label{inv:mass}
Let $L\in \mathcal{L}$ be an Euler tour.  If $\MASS(L) < K$ then $L$ consists of a single chunk.
Otherwise $L = C_0C_1\cdots C_{p-1}$ is partitioned into $\Theta(\MASS(L)/K)$ chunks
such that $\MASS(C_l) \in [K,3K]$ for all $l\in [p]$.
\end{invariant}

The chunks are partitioned into contiguous sequences of $\Theta(h)$ superchunks according to Invariant~\ref{inv:superchunk}.
For the time being define $h = 2\floor{\sqrt{w}/2}$, where $w$ is the word size.

\begin{invariant}\label{inv:superchunk}
A list in $\mathcal{L}$ having fewer than $h/2$ chunks
forms a single superchunk with ID $\bottom$.
A list in $\mathcal{L}$ with at least $h/2$ chunks is partitioned 
into superchunks, each consisting of between $h/2$ and $h-1$ consecutive chunks.
Each such superchunk has a unique ID in $[J]$, where $J = 4\hat{m}/(Kh)$.  
(IDs are completely arbitrary.  They do not encode any information about the order of superchunks within a list.)
\end{invariant}

Call an Euler tour list {\em short} if it consists of fewer than $h/2$ chunks.  
We shall assume that no lists are ever short, as this simplifies the description of the data structure and its
analysis.  In particular, all superchunks have proper IDs in $[J]$.  
In Section~\ref{sect:shortlists} we sketch the uninteresting complications introduced by 
$\bottom$ IDs and short lists.

\subsection{Word Operations}\label{sect:word-operations}

When $h\le \floor{\sqrt{w}}$, Invariant~\ref{inv:superchunk} implies
that we can store a matrix $A \in \{0,1\}^{h\times h}$ in one word
that represents the adjacency between the chunks within two superchunks $i$ and $j$.
This matrix will always be represented in row-major order; rows and columns are indexed by $[h] = \{0,\ldots,h-1\}$.
In this format it is straightforward to insert a new all-zero row above a specified row $k$ (and destroy row $h-1$)
by shifting the old rows $k,\ldots,h-2$ down by one.
It is also easy to copy an interval of rows from one matrix to another.
Lemma~\ref{lem:mask} shows that the corresponding operations on {\em columns} can also be effected in $O(1)$ time
with a fixed mask $\mu$ precomputable in $O(\log w)$ time.

\begin{lemma}\label{lem:mask}
Let $h=2\floor{\sqrt{w}/2}$ and let $\mu$ be the word $(1^h0^h)^{h/2}$.
Given $\mu$ we can in $O(1)$ time 
copy/paste any interval of columns from/to a matrix $A\in\{0,1\}^{h\times h}$, represented in row-major order.
\end{lemma}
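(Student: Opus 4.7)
The plan is to use $\mu$ to build, in $O(1)$ word operations, a ``per-row mask'' $M_{[a,b)} = (0^{h-b}\,1^{b-a}\,0^a)^h$ that selects exactly columns $[a,b)$ in every row, and then to realize copy and paste as simple masked shifts. For copying columns $[a,b)$ of $A$ into positions $[0,b-a)$ of a new matrix I would compute $(A\,\text{AND}\,M_{[a,b)}) \gg a$: the AND clears every column outside $[a,b)$ within each row, and the right shift moves columns $[a,b)$ to positions $[0,b-a)$ within the same row. Pasting is dual: clear $A$'s columns $[a,b)$ via $A\,\text{AND}\,\overline{M_{[a,b)}}$, align the source with $(B\,\text{AND}\,M_{[0,b-a)}) \ll a$, and OR the two together.

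The main subtlety to verify is that these word-level shifts do not leak bits across row boundaries. After the preceding AND with $M_{[a,b)}$, every row has zeros in columns $[0,a)$, so right-shifting by $a$ only moves harmless zero bits into the $[h-a,h)$ columns of the adjacent row. Symmetrically, $M_{[0,b-a)}$ has zeros in columns $[h-a,h)$ because $(b-a)+a = b \le h$, so left-shifting anything masked by it by $a$ keeps all ones within their own rows.

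It remains to construct $M_{[a,b)}$ from $\mu$ in $O(1)$, which is the step I expect to be the most delicate. Viewing $\mu$ as a concatenation of $h$-bit row slots, alternate slots are $1^h$ and $0^h$. I would first form $D = \mu\,\text{AND}\,(\mu \gg (h-(b-a)))$; within each $1^h$ slot the shift aligns a tail of $b-a$ ones at the low end, and the AND keeps only those bits while zeroing out the $0^h$ slots entirely. Then $D \;\text{OR}\; (D \gg h)$ replicates the $0^{h-(b-a)}\,1^{b-a}$ pattern into the intervening slots, producing $M_{[0,b-a)}$; a final left shift by $a$, which is safe by the observation of the previous paragraph, yields $M_{[a,b)}$. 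The whole routine uses a constant number of shifts, ANDs, and ORs, hence runs in $O(1)$ time as claimed, with $\mu$ itself being constructed once in $O(\log w)$ time by a standard doubling procedure that repeatedly ORs a base pattern with a $2h$-shift of itself.
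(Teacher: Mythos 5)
Your proof is correct and rests on the same core trick as the paper's: AND $\mu$ with a shifted copy of itself to isolate a column range within the even-row slots, then OR with an $h$-shift to replicate it into the odd-row slots. The only cosmetic difference is that you synthesize the interval mask $M_{[a,b)}$ directly from $\mu$, while the paper builds the suffix mask $\nu_k$ (columns $k,\ldots,h-1$) and carries out the column surgery with it; both yield $O(1)$-time copy/paste of any column interval, and your shift-safety arguments (zeros in columns $[0,a)$, resp.\ $[h-a,h)$, prevent bleed across row boundaries because $b\le h$) are exactly what is needed.
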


\newcommand{\CRIGHTSHIFT}{\texttt{>>}}
\newcommand{\CLEFTSHIFT}{\texttt{<<}}
\newcommand{\CNOT}{\texttt{$\sim$}}
\newcommand{\COR}{\texttt{|}}
\newcommand{\CAND}{\texttt{\&}}

\begin{proof}
Recall that the rows and columns are indexed by integers in $[h]=\{0,\ldots,h-1\}$.
We first describe how to build a mask $\nu_k$ for columns $k,\ldots,h-1$ then illustrate how it is used to copy/paste intervals of columns.
In C notation,\footnote{The operations $\CAND,\COR,$ and $\CNOT$ are bit-wise AND, OR, and NOT; $\CLEFTSHIFT$ and $\CRIGHTSHIFT$ are left and right shift.} the word $\nu_k' = (\mu \,\CRIGHTSHIFT\, k) \,\CAND\, \mu$ is a mask for the intersection of the even rows and columns 
$k,\ldots,h-1$, so $\nu_k = \nu_k' \,\COR\, (\nu_k' \,\CRIGHTSHIFT\, h)$ is a mask for columns $k$ through $h-1$.

To insert an all-zero column before column $k$ of $A$ (and delete column $h-1$)
we first copy columns $k,\ldots,h-2$ to $A' = A \,\CAND\, (\nu_{k+1} \,\CLEFTSHIFT\, 1)$ then
set $A = (A \,\CAND\, (\CNOT \nu_{k})) \,\COR\, (A' \,\CRIGHTSHIFT\, 1)$.  Other operations can be effected in $O(1)$ time
with copying/pasting intervals of columns, e.g., splitting an array into two about a designated column, or merging two arrays having at most 
$h$ columns together.
\end{proof}

\subsection{Adjacency Data Structures}

In order to facilitate the efficient implementation of $\REPLACEMENTEDGE$ we maintain
an $O(\hat{m}/K) \times O(\hat{m}/K)$ adjacency matrix between chunks, and a $J \times J$ adjacency matrix between 
superchunks.  However, in order to allow for efficient dynamic updates it is important that these matrices be represented
in a non-standard format described below.  The data structure maintains the following information.

\begin{itemize}[noitemsep]
\item Each list element maintains a pointer to the chunk containing it.  
Each chunk maintains a pointer to the superchunk containing it, as well as an index in $[h]$ indicating its position within the 
superchunk.  
Each superchunk maintains its ID in $[J] \cup \{\bottom\}$ and a pointer to the list containing it.

\item $\ChADJ$ is a $J\times J$ array of $h^2$-bit words ($h^2 \le w$) indexed by superchunk IDs.
The entry $\ChADJ(i,j)$ is interpreted as an $h\times h$ 0-1 matrix
that keeps the adjacency information between all pairs of chunks in superchunk $i$ and superchunk $j$.  (It may be that $i=j$.)
In particular,
$\ChADJ(i,j)(k,l) = 1$ iff there is an edge with endpoints in the $k$th chunk of superchunk $i$ and the $l$th chunk of 
superchunk $j$, so $\ChADJ(i,j) = 0$ (i.e., the all-zero matrix) if no edge joins superchunks $i$ and $j$.  
The matrix $\ChADJ(i,j)$ is stored in row-major order.

\item Let $S$ be a superchunk with $ID(S)=\bottom$.  By Invariants~\ref{inv:EulerTour} and \ref{inv:superchunk},
$S$ is not incident to any other superchunks and has fewer than $h/2$ chunks.  
We maintain a single word $\ChADJ_S$ which stores the adjacency matrix of the chunks within $S$.

\item For each superchunk with ID $i\in[J]$ we keep length-$J$ bit-vectors $\ADJ_i$ and $\MEM_i$, where
\begin{align*}
\ADJ_i(j) &= 1 \mbox{ if $\ChADJ(i,j) \neq 0$ and 0 otherwise, whereas }\\
\MEM_i(j) &= 1 \mbox{ if $j=i$ and 0 otherwise.}
\end{align*}
These vectors are packed into $\ceil{J/w}$ machine words, so scanning one takes $O(\ceil{J/w})$ time.

\item We maintain a {\em list-sum} data structure that allows us to take the bit-wise OR of the 
$\ADJ_i$ vectors or $\MEM_i$ vectors, over all superchunks in an Euler tour.
It is responsible for maintaining the $\{\ADJ_i,\MEM_i\}$ vectors described above and supports the following operations.
At all times the superchunks are partitioned into a set $\mathcal{S}$ of disjoint lists of superchunks.
Each $S\in\mathcal{S}$ (a list of superchunks) is associated with an $L\in\mathcal{L}$ (an Euler tour), 
though short lists in $\mathcal{L}$ have no need for a corresponding list in $\mathcal{S}$.
\begin{description}
\item[$\SCINSERT(i)$ : ] Retrieve an unused ID, say $i'$, and allocate a new superchunk with ID $i'$ and all-zero vector $\ADJ_{i'}$.
Insert superchunk $i'$ immediately after superchunk $i$ in $i$'s list in $\mathcal{S}$.  
If no $i$ is given, create a new list in $\mathcal{S}$ consisting of superchunk $i'$.
\item[$\SCDELETE(i)$ : ] Delete superchunk $i$ from its list and make ID $i$ unused.
\item[$\SCJOIN(S_0,S_1)$ : ] Replace superchunk lists $S_0,S_1\in \mathcal{S}$ with their concatenation $S_0S_1$.
\item[$\SCSPLIT(i)$ : ] Let $S = S_0S_1 \in \mathcal{S}$ and $i$ be the last superchunk in $S_0$.  Replace $S_0S_1$ with two lists $S_0,S_1$.
\item[$\UPDATEADJ(i,x\in\{0,1\}^J)$ : ] Set $\ADJ_i \leftarrow x$ and update $\ADJ_j(i) \leftarrow x(j)$ for all $j\neq i$.
\item[$\ADJQUERY(S)$ : ] Return the vector $\alpha \in \{0,1\}^J$ where
\[
\alpha(j) = \bigvee_{i \in S} \ADJ_i(j)
\]
The index $i$ ranges over the IDs of all superchunks in $S$.

\item[$\MEMQUERY(S)$ : ] Return the vector $\beta \in \{0,1\}^J$, where
\[
\beta(j) = \bigvee_{i\in S} \MEM_i(j)
\]
\end{description}
\end{itemize}

We use the following implementation of the {\em list-sum} data structure.  Each list of superchunks is maintained as any  $O(1)$-degree search tree that supports logarithmic time inserts, deletes, splits, and joins.  Each leaf is a superchunk
that stores its two bit-vectors.  Each internal node $z$ keeps two bit-vectors, $\ADJ^z$ and $\MEM^z$, which are 
the bit-wise OR of their leaf descendants' respective bit-vectors.  Because length-$J$ bit-vectors can be updated in $O(\ceil{J/w})$ time, all ``logarithmic time'' operations on the tree actually take $O(\log J \cdot J/w)$ time.
The $\UPDATEADJ(i,x)$ operation takes $O(\log J \cdot J/w)$ time to update superchunk $i$ and its $O(\log J)$ ancestors.
We then need to update the $i$th bit of potentially every other node in the tree, in $O(J)$ time.
Since $w=\Omega(\log n)=\Omega(\log J)$ the cost per $\UPDATEADJ$ is $O(J)$.
The answer to an $\ADJQUERY(S)$ or $\MEMQUERY(S)$ is stored at the root of the tree on $S$.

\subsection{Creating and Destroying (Super)Chunks}

There are essentially two causes for the creation and destruction of (super)chunks.  
The first is in response to a $\SPLIT$ operation that forces a (super)chunk to be broken up.  (The $\SPLIT$ may itself be instigated by the insertion or deletion of an edge.)  The second is to restore Invariants~\ref{inv:mass} and \ref{inv:superchunk} after a 
$\JOIN$ or $\INSERTEDGE$ or $\DELETEEDGE$ operation.  
In this section we consider the problem of updating the adjacency data structures after four types of operations: 
(i) splitting a chunk in two, keeping both chunks in the same superchunk,
(ii) merging two adjacent chunks in the same superchunk, 
(iii) splitting a superchunk along a chunk boundary,
and (iv) merging adjacent superchunks.
Once we have bounds on (i)--(iv), implementing the higher-level operations in the stated bounds is relatively straightforward.
Note that (i)--(iv) may temporarily violate Invariants~\ref{inv:mass} and \ref{inv:superchunk}.

\paragraph{Splitting Chunks}
Suppose we want to split the $k$th chunk of
superchunk $i$ into two pieces, both of which will (at least temporarily) stay within 
superchunk $i$.\footnote{Remember that `$k$' refers to the actual position of the chunk within its superchunk 
whereas `$i$' is an arbitrary ID that does not relate to its position within the list.}
We first zero-out all bits of $\ChADJ(i,\star)(k,\star)$ and $\ChADJ(\star,i)(\star,k)$ in $O(J)$ time.
For each $j$ we need to insert an all-zero row below row $k$ in $\ChADJ(i,j)$ and an all-zero column after 
column $k$ of $\ChADJ(j,i)$.  This can be done in $O(1)$ time for each $j$, or $O(J)$ in total; see Lemma~\ref{lem:mask}.

In $O(K)$ time we scan the edges incident to the new chunks $k$ and $k+1$
and update the corresponding bits
in $\ChADJ(i,\star)(k',\star)$ and $\ChADJ(\star,i)(\star,k')$, for $k'\in\{k,k+1\}$.

\paragraph{Merging Adjacent Chunks}
In order to merge chunks $k$ and $k+1$ of superchunk $i$ we need to replace
row $k$ of $\ChADJ(i,j)$, for all $j$, with the bit-wise OR of rows $k$ and $k+1$ of $\ChADJ(i,j)$,
zero out row $k+1$, then scoot rows $k+2,\cdots$ back one row.  A similar transformation 
is performed on columns $k$ and $k+1$ of $\ChADJ(j,i)$, 
which takes $O(1)$ time per $j$, by Lemma~\ref{lem:mask}.
In total the time is $O(J)$, independent of $K$.

\paragraph{Splitting Superchunks}
Suppose we want to split superchunk $i$ after its $k$th chunk.  
We first call $\SCINSERT(i)$, which allocates an empty superchunk with ID $i'$ and inserts $i'$ after 
$i$ in its superchunk list in $\mathcal{S}$.
In $O(J)$ time we transfer rows $k+1,\ldots,h-1$ from $\ChADJ(i,j)$ to $\ChADJ(i',j)$
and transfer columns $k+1,\ldots,h-1$ from $\ChADJ(j,i)$ to $\ChADJ(j,i')$.
By Lemma~\ref{lem:mask} this takes $O(1)$ time per $j$.

At this point $\ChADJ$ is up-to-date but the list-sum data structure and $\{\ADJ_j\}$ bit-vectors are not.
We update $\ADJ_i,\ADJ_{i'}$
with calls to $\UPDATEADJ(i,x)$ and $\UPDATEADJ(i',x')$.  Using $\ChADJ$, 
each bit of $x$ and $x'$ can be generated in constant time.  This takes $O(J)$ time.

\paragraph{Merging Superchunks}
Let the two adjacent 
superchunks have IDs $i$ and $i'$.  It is guaranteed that they will be merged only if they contain at most $h$ chunks together.
In $O(J)$ time we transfer the non-zero rows of $\ChADJ(i',j)$ to $\ChADJ(i,j)$ and transfer the non-zero columns of $\ChADJ(j,i')$ to $\ChADJ(j,i)$.
A call to $\SCDELETE(i')$ deletes superchunk $i'$ from its list in $\mathcal{S}$ and retires ID $i'$.
We then call $\UPDATEADJ(i,x)$ with
the new incidence vector $x$.  In this case we can generate $x$ in $O(J/w)$ time since it is merely the bit-wise OR
of the old vectors $\ADJ_i$ and $\ADJ_{i'}$, with bit $i'$ set to zero.  Updating the list-sum data structure takes $O(J)$ time.

\subsection{Joining and Splitting Lists}\label{sect:joiningsplittinglists}

Once we have routines for splitting and merging adjacent (super)chunks, implementing $\JOIN$ and $\SPLIT$ on lists in $\mathcal{L}$ 
is much easier.  The goal is to restore Invariant~\ref{inv:mass} governing chunk masses and Invariant~\ref{inv:superchunk} on the number of chunks per superchunk.

\paragraph{Performing $\JOIN(L_0,L_1)$.} Write $L_0 = C_{0},\ldots,C_{p-1}$ and $L_1 = D_{0},\ldots,D_{q-1}$ as a list of chunks.
If both $L_0$ and $L_1$ are not short then they have corresponding superchunk lists $S_0,S_1\in\mathcal{S}$.
Call $\SCJOIN(S_0,S_1)$ to join $S_0,S_1$ in $\mathcal{S}$, in $O(J)$ time.

\paragraph{Performing $\SPLIT(x)$.} Suppose $x$ is contained in chunk $C_l$ of 
$L = C_0\cdots C_{l-1} C_l C_{l+1}\cdots C_{p-1}$.
We split $C_l$ into two chunks $C_l' C_l''$, and split the superchunk containing $C_l$ along this line.
Let $S$ be the superchunk list corresponding to $L$ and $i$ be the ID of the superchunk ending at $C_l'$.  
We split $S$ using a call to $\SCSPLIT(i)$, which corresponds to splitting $L$ into
$L_0 = C_0\cdots C_{l-1}C_l'$ and $L_1 = C_l''C_{l+1}\cdots C_{p-1}$.
At this point $C_l'$ or $C_l''$ may violate Invariant~\ref{inv:mass} if $\MASS(C_l') < K$ or $\MASS(C_l'')<K$.
Furthermore, Invariant~\ref{inv:superchunk} may be violated
if the number of chunks in the superchunks containing $C_l'$ and $C_l''$ is too small.
We first correct Invariant~\ref{inv:mass} by possibly merging and resplitting $C_{l-1}C_l'$ and $C_l''C_{l+1}$ along new
boundaries.  If the superchunk containing $C_l'$ has fewer than $h/2$ chunks, it and the superchunk to its left
have strictly between $h/2$ and $3h/2$ chunks together, and so can be merged (and possibly resplit)
into one or two superchunks satisfying Invariant~\ref{inv:superchunk}. 
The same method can correct a violation of $C_l''$'s superchunk.  This takes $O(K + J)$ time.

\paragraph{Performing $\REPLACEMENTEDGE(L_0,L_1)$}
The list-sum data structure makes implementing the $\REPLACEMENTEDGE(L_0,L_1)$ operation easy.
Let $S_0$ and $S_1$ be the superchunk lists corresponding to Euler tours $L_0$ and $L_1$.
We compute the vectors $\alpha \leftarrow \ADJQUERY(S_0)$ and $\beta \leftarrow \MEMQUERY(S_1)$
and their bit-wise AND $\alpha\wedge \beta$ with a linear scan of both vectors.
If $\alpha\wedge\beta$ is the all-zero vector then there is no edge between $L_0$ and $L_1$.
On the other hand, if $(\alpha\wedge\beta)(j) = 1$, then $j$ must be the ID of a 
superchunk in $S_1$ that is incident to {\em some} superchunk in $S_0$.  
To determine {\em which} superchunk in $S_0$
we walk down from the root of $S_0$'s list-sum tree to a leaf, say with ID $i$, 
in each step moving to a child $z$ of the current node 
for which $\ADJ^z(j) =1$.  Once $i$ and $j$ are known we retrieve any 1-bit in the matrix 
$\ChADJ(i,j)$, say at position $(k,l)$, indicating
that the $k$th chunk of superchunk $i$ and the $l$th chunk of superchunk $j$ are adjacent.  
We scan all its adjacent edges in $O(K)$ time and retrieve an edge 
joining $L_0$ and $L_1$.  The total time is $O(J/w + \log J + K) = O(J/w+K)$.

\paragraph{Performing $\INSERTEDGE(u,v)$}
If $\LIST(u)\neq\LIST(v)$, first perform $O(1)$ $\SPLIT$s and $\JOIN$s to restore the Euler tour Invariant~\ref{inv:EulerTour}.
Now $u$ and $v$ are in the same list in $\mathcal{L}$.  
Let $i,j$ be the IDs of the superchunks containing the principle copies of $u$ and $v$
and let $k,l$ be the positions of $u$ and $v$'s chunks within their respective superchunks.
We set $\ChADJ(i,j)(k,l) \leftarrow 1$.  If $\ChADJ(i,j)$ was formerly the all-zero matrix, 
we call $\UPDATEADJ(i,x)$ to update superchunk $i$'s adjacency information with the correct vector $x$.\footnote{Since $x$ only differs from the former 
$\ADJ_i$ at position $\ADJ_i(j)$, this update to the list-sum tree 
takes just $O(\log J)$ time since it only affects ancestors of leaves $i$ and $j$.}
Inserting one edge changes the mass of the chunks containing $u$ and $v$, which could violate Invariant~\ref{inv:mass}.
Invariants~\ref{inv:mass} and \ref{inv:superchunk} are restored by splitting/merging $O(1)$ chunks and superchunks.

\paragraph{Performing $\DELETEEDGE(u,v)$} 
Compute $i,j,k,l$ as defined above, in $O(1)$ time.
After we delete $(u,v)$ the correct value of the bit $\ChADJ(i,j)(k,l)$ is uncertain.
We scan chunk $k$ or superchunk $i$ in $O(K)$ time, looking for an edge connected to chunk $l$ of superchunk $j$.
If we do {\em not} find such an edge we set $\ChADJ(i,j)(k,l) \leftarrow 0$, 
and if that makes $\ChADJ(i,j) = 0$ (the all-zero matrix), 
we call $\UPDATEADJ(i,x)$, where $x$ is the new adjacency vector of superchunk $i$;
it only differs from the former $\ADJ_i$ at position $j$.

If $(u,v)$ is a tree edge in $T = T_0 \cup \{(u,v)\} \cup T_1$
we perform $\SPLIT$s and $\JOIN$s to replace $\Euler(T)$ with $\Euler(T_0),\Euler(T_1)$, which may 
violate Invariant~\ref{inv:EulerTour} if there is a replacement edge between $T_0$ and $T_1$.  
We call $\REPLACEMENTEDGE(\Euler(T_0),\Euler(T_1))$ to find a replacement edge. 
If one is found, say $(\hat{u}, \hat{v})$, we 
form $\Euler(T_0 \cup \{(\hat{u},\hat{v})\} \cup T_1)$ with a constant number of $\SPLIT$s and $\JOIN$s.

\subsection{Running Time Analysis}\label{sect:runningtime}

Each operation ultimately involves splitting/merging $O(1)$ chunks, superchunks, and lists,
which takes time $O(K + J + \log n \cdot J/w) = O(K + J) = O(K + \hat{m}/(K\sqrt{w}))$.
We balance the terms by setting $K = \sqrt{\frac{\hat{m}}{\sqrt{w}}}$ so the running time is $O(K)$.

By the sparsification transformation of Eppstein, Galil, Italiano, and Nissenzweig~\cite{EppsteinGIN97} this implies
an update time of $O\paren{\frac{\sqrt{n}}{w^{1/4}}}$.
Each instance of dynamic connectivity created by~\cite{EppsteinGIN97} has a fixed set of vertices, say of 
size $\hat{n}$, and a fixed upper bound $\hat{m} = O(\hat{n})$ on the number of edges.

\section{Speeding Up the Algorithm}\label{sect:speedup}

Observe that there are $\Theta((\hat{m}/(Kh))^2)$ matrices $(\ChADJ(i,j))$ but only $\hat{m}$ edges, 
so for $K = \sqrt{\hat{m}/h}$, the average $h\times h$ matrix has $O(h)$ 1s.
Thus, storing each such matrix verbatim, using $h^2$ bits, is information theoretically inefficient on average.
By storing only the locations of the 1s in each matrix we can represent each matrix in $O(h\log h)$ bits on average
and thereby hope to solve dynamic connectivity faster with a larger `$h$' parameter.

\paragraph{The Encoding.} In this encoding we index rows and columns by indices in $\{1,\ldots,h\}$ rather than $[h]$.
Let $m_{i,j}=m_{j,i}$ be the number of 1s in $\ChADJ(i,j)$.
We encode $\ChADJ(i,j)$ by listing its 1 positions in $O(m_{i,j}\log h / w)$ lightly packed words.
Each word is partitioned into {\em fields} of $1 + 2\ceil{\log(h+1)}$ bits: each field consists of a control bit (normally 0),
a row index, and a column index.  Each word is between half-full and full, the fields in use being packed
contiguously in the word.  This invariant allows us to insert a new field after a given field in $O(1)$ time.
We list the 1s of {\em either} $\ChADJ(i,j)$ {\em or} $\ChADJ(j,i) = \ChADJ(i,j)^{\top}$ in row-major order, 
with a bit indicating which of the two representations is used.

\paragraph{Fast Operations.} 
Given $\ChADJ(i,j)$ in row-major order, we can determine if $\ChADJ(i,j)(k,l)=1$ 
in $O(\log((m_{i,j}\log h)/w))=O(\log h)$ time, as follows.
By doing a binary search over the first field in each word we can determine which word (if any)
has a field containing $\angbrack{k,l}$: the binary encoding of $(k,l)$.  If we add $2^{2\ceil{\log(h+1)}} - \angbrack{k,l}$ to each
field in the word, the control bits for all fields that are equal to or greater than $\angbrack{k,l}$ will be flipped to 1.
Similarly, if we set all control bits to 1 and subtract $\angbrack{k,l}+1$ from each field, the control bits of
fields that are equal to or less than $\angbrack{k,l}$ will be flipped to 0.  Thus, we can single out the control bit
for an occurrence of $\angbrack{k,l}$ (if any) with $O(1)$ bit-wise operations.  If $\angbrack{k,l}$ is not present,
the control bits reveal the field in the word after which it could be inserted, if we need to set $\ChADJ(i,j)(k,l) \leftarrow 1$.

In the same time bound we can also identify the positions of the first and last 1s in row $k$.  Thus,
we can perform the following operations on $\ChADJ(i,j)$ in $O((m_{i,j}\log h)/w)$ time: 
setting a row to zero, incrementing/decrementing the row-index of some interval of rows,
or copying an interval of rows.

The operations sketched above are only efficient if $\ChADJ(i,j)$ is in row-major order.
If we have $\ChADJ(i,j)^{\top}$ in row-major order we can effect a transpose by (1) swapping the row and column
indices in each field using masks and shifts, and (2) sorting the fields.  In general, sorting $x$ words of $O(w/\log h)$ fields
takes $O(x(\log^2(w/\log h) + \log x \log(w/\log h)))$ time using Albers and Hagerup's implementation~\cite{AlbersH97} of 
Batcher's bitonic mergesort~\cite{CLRS09}.\footnote{Albers and Hagerup also require that the fields to be sorted begin with control bits.}
We sort each word in $O(\log^2 (w/\log h))$ time, resulting in $x$ sorted lists, then iteratively merge the two shortest lists
until one list remains.  Merging two lists containing $y$ words takes 
$O(y\log(w/\log h))$ time: we can merge the next $w/\log h$ fields of each list in $O(\log(w/\log h))$ 
time~\cite{AlbersH97} and output at least $w/\log h$ items to the merged list.

Alternatively, if $w = \log n$ we can sort and merge lists of $\epsilon \log n/\log h$ fields in unit time using table lookup
to precomputed tables of size $O(n^\epsilon)$.  In this case sorting $x$ packed words takes $O(x\log x)$ time.

\paragraph{Splitting and Joining.} The cost of splitting and joining (super)chunks is now slightly more expensive.
When handling superchunk $i$ (or any chunk within it) we first put each $\ChADJ(i,j)$ in row-major order, in
$\sum_{j=1}^{J} O(\ceil{\f{m_{i,j}\log h}{w}}\log^2 h) = O(J\log^2 h + (Kh/w)\log^3 h)$ since, 
by Invariants~\ref{inv:mass} and \ref{inv:superchunk}, $\sum_j m_{i,j} = O(Kh)$.
Once the relevant superchunks are in the correct format, splitting or joining $O(1)$ (super)chunks takes
$O(K\log h + J + (Kh/w)\log h)$ time.  Since $J = O(\hat{m}/(Kh))$, the overall update time is
\[
O\paren{K\log h + \f{\hat{m}\log^2 h}{Kh} + \f{Kh\log^3 h}{w}}
\]
Setting $h = w$ and $K = \sqrt{\frac{\hat{m}}{w\log w}}$, the overall time is
$O(\sqrt{\f{\hat{m}\log^5 w}{w}})$.  When $w = O(\log n)$ the cost of taking the transpose is cheaper
since sorting and merging a packed word takes unit time via table lookup.
Setting $h=\log n$, the total time is
\[
O\paren{K\log\log n + \f{\hat{m}}{K\log n} + K(\log \log n)^2}
\]
which is $O(\sqrt{\f{\hat{m}(\log\log n)^2}{\log n}})$ when $K = \sqrt{\f{\hat{m}}{\log n(\log\log n)^2}}$.


\begin{thebibliography}{10}

\bibitem{AcarBHVW04}
U.~A. Acar, G.~E. Blelloch, R.~Harper, J.~L. Vittes, and S.~L.~M. Woo.
\newblock Dynamizing static algorithms, with applications to dynamic trees and
  history independence.
\newblock In {\em Proceedings 15th Annual {ACM-SIAM} Symposium on Discrete
  Algorithms ({SODA})}, pages 531--540, 2004.

\bibitem{AlbersH97}
S.~Albers and T.~Hagerup.
\newblock Improved parallel integer sorting without concurrent writing.
\newblock {\em Inf. Comput.}, 136(1):25--51, 1997.

\bibitem{AlstrupHLT05}
S.~Alstrup, J.~Holm, K.~de~Lichtenberg, and M.~Thorup.
\newblock Maintaining information in fully dynamic trees with top trees.
\newblock {\em ACM Trans. on Algorithms}, 1(2):243--264, 2005.

\bibitem{CLRS09}
T.~H. Cormen, C.~E. Leiserson, R.~L. Rivest, and C.~Stein.
\newblock {\em Introduction to Algorithms, 3rd ed.}
\newblock MIT Press, 2009.

\bibitem{Dij59}
E.~W. Dijkstra.
\newblock A note on two problems in connexion with graphs.
\newblock {\em Numerische Mathematik}, 1:269--271, 1959.

\bibitem{EppsteinGIN97}
D.~Eppstein, Z.~Galil, G.~Italiano, and A.~Nissenzweig.
\newblock Sparsification -- a technique for speeding up dynamic graph
  algorithms.
\newblock {\em J.~ACM}, 44(5):669--696, 1997.

\bibitem{EppsteinGIN92}
D.~Eppstein, Z.~Galil, G.~F. Italiano, and A.~Nissenzweig.
\newblock Sparsification--a technique for speeding up dynamic graph algorithms.
\newblock In {\em Proceedings 33rd Annual Symposium on Foundations of Computer
  Science (FOCS)}, pages 60--69, 1992.

\bibitem{EppsteinITTWY92}
D.~Eppstein, G.~F. Italiano, R.~Tamassia, R.~E. Tarjan, J.~Westbrook, and
  M.~Yung.
\newblock Maintenance of a minimum spanning forest in a dynamic plane graph.
\newblock {\em J.~Algor.}, 13(1):33--54, 1992.

\bibitem{EppsteinITTWY93}
D.~Eppstein, G.~F. Italiano, R.~Tamassia, R.~E. Tarjan, J.~Westbrook, and
  M.~Yung.
\newblock Corrigendum: Maintenance of a minimum spanning forest in a dynamic
  plane graph.
\newblock {\em J.~Algor.}, 15(1):173, 1993.

\bibitem{Frederickson85}
G.~Frederickson.
\newblock Data structures for on-line updating of minimum spanning trees, with
  applications.
\newblock {\em SIAM J.~Comput.}, 14(4):781--798, 1985.

\bibitem{FredmanS89}
M.~L. Fredman and M.~Saks.
\newblock The cell probe complexity of dynamic data structures.
\newblock In {\em Proceedings 21st Annual {ACM} Symposium on Theory of
  Computing (STOC)}, pages 345--354, 1989.

\bibitem{FT87}
M.~L. Fredman and R.~E. Tarjan.
\newblock Fibonacci heaps and their uses in improved network optimization
  algorithms.
\newblock {\em J.~ACM}, 34(3):596--615, 1987.

\bibitem{GibbKKT15}
D.~Gibb, B.~M. Kapron, V.~King, and N.~Thorn.
\newblock Dynamic graph connectivity with improved worst case update time and
  sublinear space.
\newblock {\em CoRR}, abs/1509.06464, 2015.

\bibitem{HenzingerK99}
M.~Henzinger and V.~King.
\newblock Randomized fully dynamic graph algorithms with polylogarithmic time
  per operation.
\newblock {\em J.~ACM}, 46(4):502--516, 1999.

\bibitem{FredmanH98}
M.~R. Henzinger and M.~L. Fredman.
\newblock Lower bounds for fully dynamic connectivity problems in graphs.
\newblock {\em Algorithmica}, 22(3):351--362, 1998.

\bibitem{HenzingerT97}
M.~R. Henzinger and M.~Thorup.
\newblock Sampling to provide or to bound: With applications to fully dynamic
  graph algorithms.
\newblock {\em J.~Random Structures and Algs.}, 11(4):369--379, 1997.

\bibitem{HolmLT01}
J.~Holm, K.~{de Lichtenberg}, and M.~Thorup.
\newblock Poly-logarithmic deterministic fully-dynamic algorithms for
  connectivity, minimum spanning tree, 2-edge, and biconnectivity.
\newblock {\em J.~ACM}, 48(4):723--760, 2001.

\bibitem{KapronKM13}
B.~M. Kapron, V.~King, and B.~Mountjoy.
\newblock Dynamic graph connectivity in polylogarithmic worst case time.
\newblock In {\em Proceedings of the 24th Annual ACM-SIAM Symposium on Discrete
  Algorithms (SODA)}, pages 1131--1142, 2013.

\bibitem{MiltersenSVT94}
P.~B. Miltersen, S.~Subramanian, J.~S. Vitter, and R.~Tamassia.
\newblock Complexity models for incremental computation.
\newblock {\em Theoretical Computer Science}, 130(1):203--236, 1994.

\bibitem{PatrascuD06}
M.~P{\v a}tra{\c s}cu and E.~Demaine.
\newblock Logarithmic lower bounds in the cell-probe model.
\newblock {\em SIAM J.~Comput.}, 35(4):932--963, 2006.

\bibitem{PatrascuT11}
M.~Patrascu and M.~Thorup.
\newblock Don't rush into a union: take time to find your roots.
\newblock In {\em Proceedings of the 43rd ACM Symposium on Theory of Computing
  (STOC)}, pages 559--568, 2011.
\newblock Technical report available as arXiv:1102.1783.

\bibitem{SleatorT83}
D.~D. Sleator and R.~E. Tarjan.
\newblock A data structure for dynamic trees.
\newblock {\em J. Comput. Syst. Sci.}, 26(3):362--391, 1983.

\bibitem{TarjanW05}
R.~E. Tarjan and R.~F. Werneck.
\newblock Self-adjusting top trees.
\newblock In {\em Proceedings 16th Annual {ACM-SIAM} Symposium on Discrete
  Algorithms (SODA)}, pages 813--822, 2005.

\bibitem{Thorup00}
M.~Thorup.
\newblock Near-optimal fully-dynamic graph connectivity.
\newblock In {\em Proceedings 32nd ACM Symposium on Theory of Computing
  (STOC)}, pages 343--350, 2000.

\bibitem{Wulff-Nilsen13}
C.~Wulff-Nilsen.
\newblock Faster deterministic fully-dynamic graph connectivity.
\newblock In {\em Proceedings of the 24th Annual ACM-SIAM Symposium on Discrete
  Algorithms (SODA)}, pages 1757--1769, 2013.
\newblock Full version available as arXiv:1209.5608.

\end{thebibliography}

\appendix

\section{Removing the Bounded Degree Assumption}\label{sect:nodegreebound}

Invariants~\ref{inv:mass} and \ref{inv:superchunk} imply that there are $J = \Theta(\hat{m}/(Kh))$ superchunks with non-$\bottom$ IDs.
However, Invariant~\ref{inv:mass} cannot be satisfied (as stated) unless the maximum degree is bounded by $O(K)$.  
One way to guarantee this is to physically split up high degree vertices, replacing each $v$ with a cycle on new vertices 
$v_1,\ldots,v_{\ceil{\deg(v)/\Theta(K)}}$, each of which is responsible for $\Theta(K)$ of $v$'s edges.  This is the method
used by Frederickson~\cite{Frederickson85}, who actually demanded that the maximum degree be 3 at all times!  

This vertex-splitting can be effectively simulated in our algorithm as follows.  If $\deg(v)\ge K/2$, replace the principle copy of $v$
in its Euler tour with an interval of artificial principle vertices $v_1,\ldots,v_{\ceil{\deg(v)/(K/2)}}$, each of which is responsible
for between $K/2$ and $K$ of $v$'s edges.  Invariant~\ref{inv:mass} is therefore maintained w.r.t.~this modified tour.
To keep the mass of artificial vertices between $K/2$ and $K$, each edge insertion/deletion may require
splitting an artificial vertex or merging two consecutive artificial vertices.
When the Euler tour changes we always preserve the invariant that $v$'s artificial vertices form a contiguous interval in the tour.

\section{Dealing with Short Lists}\label{sect:shortlists}

Until now we have assumed for simplicity that all superchunks have proper IDs in $[J]$.
It is important that we {\em not} give out IDs to short lists (consisting of less than $h/2$ chunks) 
because the running time of the algorithm is linear in the maximum ID $J$.  
The modifications needed to deal with short lists are tedious but minor.

Consider an $\INSERTEDGE(u,v)$ operation where $u$ and $v$ are in lists $L_0,L_1$
and $L_1$ is a short list consisting of one superchunk $S$ with $\ID(S)=\bottom$.  If $L_0$ is not short
(or if it is short but the combined list $L_0L_1$ will not be short) then we retrieve an unused ID, say $i$,
set $\ID(S)\leftarrow i$, set $\ChADJ(i,i) \leftarrow \ChADJ_S$, and destroy $\ChADJ_S$.   By Invariant~\ref{inv:EulerTour},
$S$ was not incident to any other superchunk, so $\ChADJ(i,j)=0$ (the all-zero matrix) for all $j\neq i$. 
At this point $S$ violates Invariant~\ref{inv:superchunk} (it is too small), so we need to merge it with the
last superchunk in $L_0$ and resplit it along a different chunk boundary, in $O(J)$ time.

The modifications to $\DELETEEDGE(u,v)$ are analogous.  If we delete a tree edge $(u,v)$, splitting
its component into $T_0$ and $T_1$ with associated Euler tours $L_0$ and $L_1$, and
$\REPLACEMENTEDGE(u,v)$ fails to find an edge joining $L_0$ and $L_1$, we need to check
whether $L_0$ (and $L_1$) are short.  If so let $S$ be the superchunk in $L_0$.
We allocate and set $\ChADJ_S \leftarrow \ChADJ(\ID(S),\ID(S))$, 
then set $\ChADJ(\ID(S),\ID(S)) \leftarrow 0$ and finally retire $\ID(S)$.

The implementation of $\REPLACEMENTEDGE(L_0,L_1)$ is different if $L_0$ and $L_1$ were originally
in a short list $L = \Euler(T)$ before a tree edge in $T$ was deleted.  Suppose $L$ originally had 
one superchunk $S$, whose chunk adjacency was stored in $\ChADJ_S$.  
After $O(1)$ splits and joins, both $L_0$'s chunks and $L_1$'s chunks 
occupy $O(1)$ intervals of the rows and columns of $\ChADJ_S$.  Of course $\ChADJ_S$ is represented
as a list of its 1 positions in row-major order, so we can isolate the correct intervals of rows and columns
in $O(h^2\log^3 h/w)$ time.
If there is any 1 there, say at location $\ChADJ_S(k,l)$, then we know that there is an edge between $L_0$ and $L_1$,
and can find it in $O(K)$ time be examining chunks $k$ and $l$.  
The permutation of rows/columns in $\ChADJ_S$ must be updated to reflect any splits and joins that take place,
and if no replacement edge is discovered, $\ChADJ_S$ must be split into two lists representing matrices 
$\ChADJ_{S_0}$ and $\ChADJ_{S_1}$, to be identified with the single superchunks $S_0$ and $S_1$ in $L_0$ and $L_1$, respectively.

\newpage
\section{Summary of Prior Work}

\newcommand{\RANDLV}{{Randomized Las Vegas}}
\newcommand{\RANDMC}{{Randomized Monte Carlo}}
\begin{table}[h]
\scalebox{0.96}{\parbox{6.5in}{
\begin{tabular}{ll@{\istrut[3]{0}\hcm[.5]}l@{\hcm[.3]}l}
\multicolumn{4}{c}{\sc Worst Case Data Structures\istrut[3]{0}}\\
\multicolumn{1}{l}{\sc Ref.}	&	\multicolumn{1}{l}{\sc Update Time}	&\multicolumn{1}{@{\hcm[0]}l}{\sc Query Time}	& \multicolumn{1}{@{\hcm[0]}l}{\sc Notes}\\\hline
\cite{Frederickson85}	&	$O\paren{\sqrt{m}}$	& $O\paren{1}$		& \istrut{4.5}\\
\cite{EppsteinGIN97,Frederickson85}	&	$O\paren{\sqrt{n}}$	& $O\paren{1}$		& \cite{Frederickson85} + sparsification~\cite{EppsteinGIN97}.\istrut[6]{0}\\
\rb{-2}{\cite{KapronKM13}}	&	\rb{-2}{$O\paren{c\log^5 n}$}	& \rb{-2}{$O\paren{\log n/\log\log n}$} 	& \RANDMC;\\
\rb{-2}{\cite{GibbKKT15}}		&	\rb{-2}{$O\paren{c\log^4 n}$}	& \rb{-2}{$O\paren{\log n/\log\log n}$}			& \rb{1.5}{no connectivity witness;}\\
				&						&							&  \rb{3}{$n^c$ opers. err with prob.~$n^{-c}$.}\\
\rb{-2}{\bf new}		&	$O\paren{\sqrt{\f{n(\log\log n)^2}{\log n}}}$	& \rb{-2}{$O\paren{1}$}				& \rb{-2}{$w=\Omega(\log n)$}\\
				&	$O\paren{\sqrt{\f{n\log^5 w}{w}}}$			& \\\hline
&\\
\multicolumn{4}{c}{\sc Amortized Data Structures\istrut[3]{6}}\\
\multicolumn{1}{l}{\sc Ref.}	&	\multicolumn{1}{l}{\sc Amort.~Update}	& \multicolumn{1}{@{\hcm[0]}l}{\sc W.C.~Query}	& \multicolumn{1}{@{\hcm[0]}l}{\sc Notes}\\\hline
\cite{HenzingerK99}							&	$O\paren{\log^3 n}$	& $O\paren{\log n / \log\log n}$ 	& \RANDLV.\istrut{5.25}\\
\cite{HenzingerT97}							&	$O\paren{\log^2 n}$ 	& $O\paren{\log n/\log\log n}$ & \RANDLV.\\
\cite{HolmLT01}\					&	$O\paren{\log^2 n}$ 	& $O\paren{\log n/\log\log n}$ 	&\\
\cite{Thorup00}										&	$O\paren{\log n(\log\log n)^3}$ 	& $O\paren{\log n/\log\log\log n}$ & \RANDLV.\\
\cite{Wulff-Nilsen13}									&	$O\paren{\log^2 n/\log\log n}$		& $O\paren{\log n/\log\log n}$	\\\hline
&\\
\multicolumn{4}{c}{\sc Amort./Worst Case Lower Bounds\istrut[3]{6}}\\
\multicolumn{1}{l}{\sc Ref.}	&	\multicolumn{1}{l}{\sc Update Time $t_u$}	& \multicolumn{1}{@{\hcm[0]}l}{\sc Query Time $t_q$}	& \multicolumn{1}{@{\hcm[0]}l}{\sc Notes}\\\hline
\zero{\cite{FredmanS89,FredmanH98,MiltersenSVT94}}			&						& $t_q = \Omega\paren{\log n / \log(t_u\log n)}$		&\istrut[4]{5.25}\\
\cite{PatrascuD06}						&	$t_u = \Omega\paren{\log n / \log(t_q/t_u)}$ 	& $t_q = \Omega\paren{\log n / \log(t_u/t_q)}$	& Implies $\max\{t_u,t_q\} = \Omega(\log n)$.\istrut[4]{0}\\
\cite{PatrascuT11}							&	$o\paren{\log n}$	\hfill implies				& $\Omega\paren{n^{1-o(1)}}$\\\hline
\end{tabular}
}}
\caption{\label{table:priorwork}A survey of dynamic connectivity results.  
The lower bounds hold in the cell probe model with word size $w=\Theta(\log n)$.}
\end{table}

\end{document}